\theoremstyle{thmstyleone}%
\newtheorem{theorem}{Theorem}
\theoremstyle{thmstyletwo}%
\theoremstyle{thmstylethree}%
\newtheorem{definition}{Definition}%
\begin{document}

\title[Article Title]{Efficient Quantum Secret Sharing Scheme Based On Monotone Span Program}

\author[1]{\fnm{Shuangshuang} \sur{Luo}}

\author*[2]{\fnm{Zhihui} \sur{Li}}\email{lizhihui@snnu.cn;Tel:+86-130-3298-9886}

\author[3]{\fnm{Depeng} \sur{Meng}}

\author[4]{\fnm{Jiansheng} \sur{Guo}}

\affil{\orgdiv{School of Mathematics and Statistics}, \orgname{Shaanxi Normal University}, \orgaddress{ \city{Xi'an}, \country{China}}}

\affil*{\orgdiv{School of Mathematics and Statistics}, \orgname{Shaanxi Normal University}, \orgaddress{ \city{Xi'an}, \postcode{710119}, \country{China}}}

\affil{\orgdiv{School of Mathematics and Statistics}, \orgname{Shaanxi Normal University}, \orgaddress{ \city{Xi'an}, \country{China}}}

\affil{\orgdiv{School of Mathematics and Statistics}, \orgname{Shaanxi Normal University}, \orgaddress{ \city{Xi'an}, \country{China}}}


\abstract{How to efficiently share secrets among multiple participants is a very important problem in key management. In this paper, we propose a multi-secret sharing scheme based on the GHZ state. First, the distributor uses monotone span program to encode the secrets and generate the corresponding secret shares to send to the participants. Then, each participant uses the generalized Pauli operator to embed its own secret share into the transmitted particle. The participant who wants to get the secrets can get multiple secrets at the same time by performing a GHZ-state joint measurement. Futhermore, the scheme is based on a monotone span program, and its access structure is more general than the access structure $(t,n)$ threshold. Compared with other schemes, our proposed scheme is more efficient, less computational cost.}

\keywords{Access structure, Monotone Span Program, GHZ state, Quantum secret sharing}


\maketitle

\section{Introduction}\label{sec1}

Secret sharing plays an important role in the field of cryptography. Its main idea is to divide the secret into several shares and distribute them to participants. Only the participants in the authorized set can cooperate to reconstruct the secret. Any one or more unauthorized participants cannot reconstruct the secret, which makes secret sharing significant in information security, because multiple participants hold secret shares, it can spread risks and prevent secrets from being destroyed or changed. In 1979, Shamir \cite{ref-ur1} and Blakley \cite{ref-ur2} proposed the first classical threshold secret sharing (TSS) based on Lagrange interpolation, and applied it to various fields \cite{ref-ur3,ref-ur4,ref-ur5,ref-ur6,ref-ur7,ref-ur8,ref-ur9}. However, the security of classical secret sharing depends on the assumption of computational complexity, which cannot provide unconditional secure communication. To solve these problems, in 1984, C.H.Bennett and G.Brassard jointly proposed the famous BB84 protocol \cite{ref-ur10}, which marked the birth of quantum cryptography. In 1999, Hillery et al.\cite{ref-ur11}  proposed the first quantum secret sharing (QSS) scheme based on the entanglement properties of GHZ (Greenberger Horne Zeilinger) state. Since then, more and more quantum secret sharing schemes \cite{ref-ur12,ref-ur13,ref-ur14,ref-ur15,ref-ur16,ref-ur17,ref-ur18,ref-ur19,ref-ur20,ref-ur21} have been proposed. Compared with classical secret sharing, quantum secret sharing shares secrets through quantum operations, so, the security of the scheme is based on quantum theory, such as the non-cloning theorem of quantum mechanics, Heisenberg uncertainty principle and the principle of indistinguish ability of non-orthogonal quantum states. Quantum secret sharing has attracted extensive attention of many cryptologists and become a research hotspot of scholars. It has been rapidly developed and applied in a short time. For example, In 2019, Chen et al.\cite{ref-ur12} proposed a flexible evaluator two-dimensional quantum homomorphic encryption scheme based on $(k,n)$ threshold quantum state sharing, In 2022, Wang et al.\cite{ref-ur14} constructed a threshold QSS scheme by using rotation operation. At the same time, some quantum secret sharing schemes in high-dimensional space have been proposed \cite{ref-ur15,ref-ur16,ref-ur17,ref-ur18,ref-ur19,ref-ur21}. However, there are problems with some of these schemes, for example, in 2018, Song et al. \cite{ref-ur13} proposed a QSS scheme with $(t,n)$ threshold. In this scheme, any t participants perform unitary operation based on the Lagrange interpolation, and then the participants can recover the secret by performing quantum inverse Fourier transform. However, Kao et al. \cite{ref-ur22} found that the scheme in \cite{ref-ur13} has a defect, that is, the participant cannot successfully recover the secret. In 2019, Mashhadi \cite{ref-ur19} inspired by the quantum Fourier transform (QFT), proposed a general QSS scheme based on the monotone span program. However, this scheme cannot conduct eavesdropping check during the process of the particle transmission, which leads to weak security of the scheme. In 2021, Li et al. \cite{ref-ur23} proposed a general QSS scheme based on two qudit Bell state similar to Mashhadi \cite{ref-ur20}'s scheme, inserting decoy particles to improve security. In literature \cite{ref-ur24}, Lu et al. used unitary operation to construct a quantum secret sharing scheme, the disadvantage of this scheme is that it is impossible to verify whether the recovered secret is the original secret. On the other hand, in the quantum secret sharing schemes \cite{ref-ur12,ref-ur13,ref-ur14,ref-ur15,ref-ur16,ref-ur17,ref-ur18,ref-ur19,ref-ur20,ref-ur21} mentioned above, threshold quantum secret sharing schemes accounts for a large proportion. For example, $(n,n)$ threshold QSS schemes \cite{ref-ur15,ref-ur19} and $(t,n)$ threshold QSS schemes \cite{ref-ur12,ref-ur13,ref-ur14,ref-ur16,ref-ur17,ref-ur18}. In the $(t,n)$ threshold QSS scheme, the size of each authorized set is at leas $t(t \leq n)$, that is, the secret can be recovered only when at least t participants participate in cooperation, and the secret cannot be recovered if less than t participants. However, considering some practical situations, such as some cashiers and managers sharing the keys of bank safe, the rights of cashiers and managers are different. At this time, the $(t,n)$ threshold QSS is no longer suitable, so the access structure is introduced to make QSS more general. The access structure is a set of subsets of all participants authorized to reconstruct the secret. In the general QSS scheme, the secret is shared among participants, and only the participants in the authorized set can recover the secret \cite{ref-ur20,ref-ur21}, each authorized set can have a different number of participants, so the general QSS scheme is an extension of the $(t,n)$ threshold QSS scheme. Compared with the $(t,n)$ threshold QSS scheme, the general QSS scheme is more practical. \\ 
\indent For the construction of QSS schemes, it is more and more important to verify the correctness of the recovered secrets and improving the efficiency of the schemes while ensuring the security of schemes. In 2013, Yang et al. \cite{ref-ur15} used multiparty computation to construct a multi-dimensional QSS scheme based on quantum Fourier transform (QFT), but this scheme could not verify the correctness of the recovered secret. Although the literature \cite{ref-ur20} uses hash function to check the correctness of the recovered secret, it uses twice hash. In 2022, Wu et al. \cite{ref-ur21} proposed a secret sharing scheme based on d dimensional three-qudit GHZ state with adversary structure. The GHZ state is transmitted between the distributor and the participants. The participants perform unitary transformation on the GHZ state in turn. The last participant performs a GHZ-state joint measurement and publishes the result. The first quantum state is used as the sharing secret information, and the last two quantum states are used as the verification information. Although the scheme verifies the secret, its efficiency is relatively low, that is, two of the three-qudit GHZ state are used as verification information instead of sharing secret information, which wastes quantum resources. Therefore, to solve this problem, we construct a new quantum secret sharing scheme based on three-qudit GHZ state. Our scheme can guarantee the correctness of the recovered secret through once hash, and has low computational cost. In our scheme, all three particles of three-qudit GHZ state are used to share secret information, which improves the efficiency.

The structure of this paper is as follows: In Section 2, we briefly introduce some preliminary knowledge required by the scheme. In Section 3, we describe a secret sharing scheme in detail, and in Section 4, we give a simple example. In Section 5, we discuss the correctness, verifiability and security of the scheme. In Section 6, we analyze and compare existing schemes to illustrate the advantages of the scheme. Finally, In Section 7, we give conclusions.

\section{Preliminaries}\label{sec2}

In this section, we will briefly introduce some preliminary knowledge required by the scheme, including unitary operation, monotone span program and hash function.

\subsection{Three-qudit GHZ State and Unitary Operation}\label{subsec2}

Let $d$ be an odd prime and $F_d$ be a finite field. In the $d$-dimensional Hilbert space, the generalized $n$-qudit GHZ state is described as follows.
\begin{equation}
\mid\varPsi_{{u_1},{u_2},\dots,{u_n}}\rangle=\frac{1}{\sqrt{d}}\sum_{j=0}^{d-1} \omega^{j{u_1}}\mid{j,j+{u_2,\dots,j+u_n}}\rangle,\label{eq1}
\end{equation}
where $\omega=e^\frac{2\pi i}{d},u_i\in\lbrace0,1,\dots,d-1\rbrace, i=1,2,\dots,n$. The symbol "+" is an addition in the finite field $F_d$.

Specially, the generalized three-qudit GHZ state in the $d$-dimensional Hilbert space is described as follows.
\begin{equation}
\mid\varPsi_{{u_1},{u_2},{u_3}}\rangle=\frac{1}{\sqrt{d}}\sum_{j=0}^{d-1} \omega^{j{u_1}}\mid{j,j+{u_2,j+u_3}}\rangle,\label{eq2}
\end{equation}
Further, the generalized Pauli operator can be described in the following form:
\begin{equation}
U_{{\alpha},{\beta},{0}}=\sum_{j=0}^{d-1} \omega^{j{\alpha}}\mid{j}\rangle\langle{j+\beta}\mid,\label{eq3}
\end{equation}
where $\alpha,\beta\in\lbrace0,1,\dots,d-1\rbrace$. The symbol "+" is an addition in the finite field $F_d$. And the subscript "0" of denotes the initial value of the third position.

\subsection{Monotone Span Program}\label{subsec2}

Assume that $P=\lbrace{{P_1},{P_2},\dots,{P_n}}\rbrace$ is the set of all participants in a secret sharing scheme, $\varGamma$ is a subset of $2^p$.  If the participants contained in the element of $\varGamma$ can restore secrets, we call $\varGamma$ is the access structure in $P$,  the element in $\varGamma $ that is a authorized subset. In this article, let $\varGamma=\lbrace{A_1,A_2,\dots,A_r}\rbrace$, where $A_i\in \varGamma,i=1,2,\dots,r$ is authorized set.
A monotonic access structure in $P$ is a collection of a subset of the set $P$ and meet the monotone increasing properties: $B\in\varGamma$, when $A_i\in \varGamma $, $A_i\subseteq B\subseteq P, i=1,2,\dots,r $. A non-access structure, denoted as $\Delta $, that is a collection of a subset of the set $P$ and meet the monotone decreasing properties: $A\in\Delta$, when $B\in \Delta $, $A\subseteq B\subseteq P $. Access structure and non-access structure satisfy the relationship: $\Delta=\varGamma^c $.

\begin{definition}\label{Def1}
Monotone span program is a four-tuple $(F_d,M,f,\xi)$, $M$ is a $k\times l$ matrix on the finite field $F_d$. Let row $i$ of matrix $M$ is $\alpha_i$, and do the mapping $f$, where $f:\lbrace{{\alpha_1},{\alpha_2},\dots,{\alpha_k}}\rbrace\to P$ is a surjection which used to allocate rows of matrix $M$ to each participant; $\xi=(1,0,\dots,0)^T\in F_d$ is called target vector.
\end{definition} 

\begin{definition} \label{Def2}
For access structure $\varGamma=\lbrace{A_1,A_2,\dots,A_r}\rbrace$, if the following conditions are met: \\
(1) For any $A_i\in\varGamma$, there exists a vector $\lambda_{A_i}\in F_d^{m_i} $ such that $M_{A_i}^T\lambda_{A_i}=\xi$. \\
(2) For any $A_i\in \varGamma$, there exists a vector $h=(1,h_2,h_3,\dots,h_l)\in F_d^{m_i}$ such that $M_{A_i}h\in F_d^{m_i}$, which is called the sweeping vector for $A_i$.\\
it needs to be said $m_i$ in definition 2 is the number of players in $A_i$, and $M_{A_i}$ denotes the rows $j$ of $M$ such that $f(\alpha_j)\in A_i$. Then the four-tuple $(F_d,M,f,\xi)$ is called a monotone span program.
\end{definition}
according to the monotone span program $(F_d,M,f,\xi)$, we can construct line secret sharing on access structure $\varGamma$. Line secret sharing includes distribution phase and reconstruction phase. In the distribution phase, Alice constructs a vector $\rho$, where the first component of the vector is a secret, computes $M_j\rho$, and sends the value to the participant $P_j$ by a secure channel \cite{ref-ur32,ref-ur33}, where $M_j$ represents the rows of the matrix $M$ owned by the participant$P_j$. In the distribution phase, for the authorized set $A_i$, exists vector $\lambda_{A_i}$, which makes the participants in $A_i$ can recover the secret $s$ as follows.
\begin{equation}
s_{A_i}\lambda_{A_i}=(M_{A_i}\rho)^T\lambda=\rho^T(M_{A_i}^T\lambda_{A_i})=\rho^T\xi=s
\label{eq4}
\end{equation}

\subsection{Hash Function}\label{subsec3}

Hash function has important applications in cryptography\cite{ref-ur25,ref-ur26,ref-ur27,ref-ur28,ref-ur29,ref-ur30,ref-ur31}. Many schemes use hash function $H(\cdot)$ to verify the correctness of the recovered secret. In this article, hash function $H(\cdot):F_d \to F_d$ and it satisfied the following properties:\\
(1)Computability: known $x \in F_d $, easy to calculate $H(x)$;\\
(2)Unidirectivity: given a value $y\in F_d$, it is computationally infeasible to find $x$ satisfying $H(x)=y$;\\
(3)Weak collision resistance: given a value $x\in F_d$, find out $y\in F_d(y\ne x)$, so that $H(x)=H(y)$ is computationally infeasible;\\
(4)Strong collision resistance: find any two different input values $x$, $x^\prime$, so that $H(x)=H(x^\prime)$ is not feasible in calculation.

\section{Our Scheme}\label{sec3}

In this section, we will construct a verifiable quantum secret sharing scheme, which can transmit multiple secrets by one time. This scheme includes Alice and $n$ participants $P_1$, $P_2$,...,$P_n$. In this scheme, we assume that $H(\cdot)$ is a public hash function that satisfies 2.3 in last section. The access structure $\varGamma=\lbrace{A_1,A_2,\dots,A_r}\rbrace$, where $A_i\in \varGamma,i=1,2,\dots,r$, is authorized set. In this article we denote the authorized set as $A_i=\lbrace{P_1^{(i)},P_2^{(i)},\dots,P_{m_i}^{(i)}}\rbrace,1\leq m_i\leq n$. About $\varGamma$, Alice constructs a monotone span program $(F_d,M,f,\xi)$ on $\varGamma$  as described in Section 2.2 of this article, $M$ is a $n\times l$ matrix on the finite field $F_d$; mapping $f$ satisfies $f(\alpha_i)=P_i$. According to the monotone span program $(F_d,M,f,\xi)$, we can construct line secret sharing on access structure $\varGamma$. Line secret sharing includes distribution phase and reconstruction phase. In the distribution phase, Alice constructs a vector $\rho$, where the first component of the vector is a secret, computes $M_j\rho$, and sends the value to the participant $P_j$ by a secure channel \cite{ref-ur32,ref-ur33}, where $M_j$ represents the rows of the matrix $M$ owned by the participant$P_j$. In the distribution phase, according to (4), the participants in the authorized set $A_i$ can recover the secret $s$.

\subsection{Preparation Phase}\label{subsec1}

Suppose Alice wants to share classical information $(s_1,s_2,s_3), s_i\in F_d, i=1,2,3$, then she needs to do the following operations:

\indent\textbf{3.1.1} Alice prepares randomly a three-qudit GHZ state in the $d$-dimensional Hilbert space:

\[\mid\varPsi_{{u_1},{u_2},{u_3}}\rangle=\frac{1}{\sqrt{d}}\sum_{j=0}^{d-1} \omega^{j{u_1}}\mid{j,j+{u_2,j+u_3}}\rangle,\qedhere\]
where $\omega=e^\frac{2\pi i}{d},u_i\in\lbrace0,1,\dots,d-1\rbrace, i=1,2,\dots,n$. The symbol "+" is an addition in the finite field $F_d$. Three particles of $\mid\varPsi_{{u_1},{u_2},{u_3}}\rangle$ are called as $Q_1,Q_2,Q_3$.

\indent\textbf{3.1.2} Alice prepares randomly some decoy particles in $Z$ basis and $X$ basis.\\
The $Z$ basis and $X$ basis have the following forms:
\begin{align}
 Z &=\lbrace\mid i\rangle\rbrace,\notag\\
 X &=\lbrace\mid J_j\rangle\rbrace,\notag\
\end{align}
where $\mid J_j\rangle=\frac{1}{\sqrt{d}}\sum_{j=0}^{d-1}\omega^{kj}\mid{k}\rangle\ ,i,j \in \lbrace0,1,\dots,d-1\rbrace $.

\indent\textbf{3.1.3} According to the secrets $s_1, s_2$, Alice constructs vectors
\begin{align}
 \rho_1 &=(s_1,a_2^{(1)},a_3^{(1)},\dots,a_{m_i}^{(1)})^T,\notag\\
 \rho_2 &=(s_2,a_2^{(2)},a_3^{(2)},\dots,a_{m_i}^{(2)})^T,\notag\
\end{align}
where $a_2^{(1)},a_3^{(1)},\dots,a_{m_i}^{(2)}\in F_d $.

\indent\textbf{3.1.4} Alice according to the monotone span program, for the authorized set $A_i$ in the access structure $\varGamma$, $M_{A_i}$ can be obtained by mapping $f$, calculates
\begin{align}
 \ S^{(i)} &=M_{A_i}\rho_1=(s_1^{(i)},s_2^{(i)},\dots,s_{m_i}^{(i)})^T,\notag\\
 \ T^{(i)} &=M_{A_i}\rho_2=(t_1^{(i)},t_2^{(i)},\dots,t_{m_i}^{(i)})^T.\notag\
\end{align}
And the unique solution $\lambda_{A_i}$ can be obtained according to the $M_{A_i}$ and $\xi$.

\indent\textbf{3.1.5} Alice uses the public hash function $H(\cdot)$, computes and publishes $H(s_1),H(s_2),H(s_3)$.

\subsection{Distribution Phase}\label{subsec2}

\textbf{3.2.1} Alice sends the corresponding $j$ th component of the vectors $S^{(i)},T^{(i)}$ and $\lambda_{A_i}$ as secret shares to the participants $P_j^{(i)}$ in the authorized set $A_i$ through the secure quantum channel. $j=1,2,\dots,m_i$.

\indent\textbf{3.2.2} Alice inserts the particle $Q_1$ into the decoy particles to form the particle sequences $C_1$. And records the inserted position and initial state of the decoy particles. Then Alice sends the particle sequences $C_1$ to the participant $P_1^{(i)}$ in the authorized $A_i$.

\subsection{Recovery Phase}\label{subsec3}

In our scheme, the secrets recovery phase can be divided into several steps as follows.

\indent\textbf{3.3.1} After ensuring the participant $P_1^{(i)}$ receives these particle sequences $C_1$, Alice publicly informs participant $P_1^{(i)}$ of the basis with the decoy particles, and then asks the participant $P_1^{(i)}$ to select the $Z$ basis or the $X$ basis to measure these particles according to the basis of the decoy particle. Next, $P_1^{(i)}$ publishes the measurement results. Alice calculates the error ratio by comparing the measurement results with initial state of decoy particles. If the value exceeds the predetermined threshold (usually not more than 11\% \cite{ref-ur34}), Alice asks the participant $P_1$ to stop this operation and start a new round. Otherwise, continue to execute these steps. 

\indent\textbf{3.3.2} Participant $P_1^{(i)}$ prepares a random number $q_1$, and performs unitary transformation
$U_{{q_1},{q_1},{0}}=\sum_{j=0}^{d-1} \omega^{{q_1}j}\mid{j}\rangle\langle{j+{q_1}}\mid\notag$
on particle $Q_1$, then the GHZ state 
$\mid\varPsi_{{u_1},{u_2},{u_3}}\rangle$
will become $\mid\varPsi_{{u_1}+{q_1},{u_2}+{q_1},{u_3}+{q_1}}\rangle$. Then, participant $P_1^{(i)}$ prepares some decoy particles on the $Z$ basis and $X$ basis, and inserts the particle $Q_1$ into the decoy particles to form the particle sequences $C_2$, then transmits these particle sequences $C_2$ to the next participant $P_2^{(i)}$. Similarly, participant $P_1^{(i)}$ can use these decoy particles to check the security of the channel between $P_1^{(i)}$ and $P_2^{(i)}$. Participant $P_2^{(i)}$ performs unitary transformation
\begin{equation}
U_{{\lambda_{A_i}^{(2)}s_2^{(i)}},{\lambda_{A_i}^{(2)}t_2^{(i)}},{0}}=\sum_{j=0}^{d-1} \omega^{{\lambda_{A_i}^{(2)}s_2^{(i)}}j}\mid{j}\rangle\langle{j+{\lambda_{A_i}^{(2)}t_2^{(i)}}}\mid\notag
\end{equation}
on particle $Q_1$, then the GHZ state $\mid\varPsi_{{u_1}+{q_1},{u_2}+{q_1},{u_3}+{q_1}}\rangle\ $ will become 
\[\mid\varPsi_{{u_1}+{q_1}+{\lambda_{A_i}^{(2)}s_2^{(i)}},{u_2}+{q_1}+{\lambda_{A_i}^{(2)}t_2^{(i)}},{u_3}+{q_1+{\lambda_{A_i}^{(2)}t_2^{(i)}}}}\rangle\]  
Where $\lambda_{A_i}^{(2)}$ is the private value of participant $P_2^{(i)}$, $s_2^{(i)}$ and $t_2^{(i)}$ are the corresponding 2 th component of the vectors $S^{(i)},T^{(i)}$. Repeat this operation until the participant $P_{m_i}^{(i)}$ to get the GHZ state
\[\mid\varPsi_{{u_1}+{q_1}+\sum_{j=2}^{m_i}{\lambda_{A_i}^{(j)}s_j^{(i)}},{u_2}+{q_1}+\sum_{j=2}^{m_i}{\lambda_{A_i}^{(j)}t_j^{(i)}},{u_3}+{q_1}+\sum_{j=2}^{m_i}{\lambda_{A_i}^{(j)}t_j^{(i)}}}\rangle\]  

\indent\textbf{3.3.3} After all above operations are completed, Alice performs a unitary transformation $U_{{d-u_1},{d-u_2},{0}}$ on the particle $Q_2$, then Alice performs a unitary transformation $U_{{0},{d+u_3+s_2-s_3},{0}}$ on the particle $Q_3$, then the GHZ state 
\[\mid\varPsi_{{u_1}+{q_1}+\sum_{j=2}^{m_i}{\lambda_{A_i}^{(j)}s_j^{(i)}},{u_2}+{q_1}+\sum_{j=2}^{m_i}{\lambda_{A_i}^{(j)}t_j^{(i)}},{u_3}+{q_1}+\sum_{j=2}^{m_i}{\lambda_{A_i}^{(j)}t_j^{(i)}}}\rangle\] 
will become
$\mid\varPsi_{{q_1}+\sum_{j=2}^{m_i}{\lambda_{A_i}^{(j)}s_j^{(i)}},{q_1}+\sum_{j=2}^{m_i}{\lambda_{A_i}^{(j)}t_j^{(i)}},{q_1}+\sum_{j=2}^{m_i}{\lambda_{A_i}^{(j)}t_j^{(i)}-s_2+s_3}}\rangle $. Last, Alice sends the particle $Q_2$ and $Q_3$ to the participant $P_1^{(i)}$. (this step also requires the use of decoy particles for security checks)

\indent\textbf{3.3.4} Participant $P_{m_i}^{(i)}$ sends the particle $Q_1$ to the participant $P_1^{(i)}$ at the same. After participant $P_1^{(i)}$ receives the particle $Q_1$, participant $P_1^{(i)}$ performs a unitary transformation
\begin{equation}
U_{{\lambda_{A_i}^{(1)}s_1^{(i)}-q_1},{\lambda_{A_i}^{(1)}t_1^{(i)}-q_1},{0}}=\sum_{j=0}^{d-1} \omega^{{q_1}j}\mid{j}\rangle\langle{j+{\lambda_{A_i}^{(1)}t_1^{(i)}-q_1}}\mid\notag
\end{equation}
on the particle $Q_1$, the GHZ state
\[\mid\varPsi_{{q_1}+\sum_{j=2}^{m_i}{\lambda_{A_i}^{(j)}s_j^{(i)}},{q_1}+\sum_{j=2}^{m_i}{\lambda_{A_i}^{(j)}t_j^{(i)}},{q_1}+\sum_{j=2}^{m_i}{\lambda_{A_i}^{(j)}t_j^{(i)}-s_2+s_3}}\rangle\] 
will become
$\mid\varPsi_{\sum_{j=1}^{m_i}{\lambda_{A_i}^{(j)}s_j^{(i)}},\sum_{j=1}^{m_i}{\lambda_{A_i}^{(j)}t_j^{(i)}},\sum_{j=1}^{m_i}{\lambda_{A_i}^{(j)}t_j^{(i)}-s_2+s_3}}\rangle $, that is $\mid\varPsi_{{s_1},{s_2},{s_3}}\rangle$. Eventually, participant $P_1^{(i)}$ performs GHZ-state joint measurement, where $\mid\varPsi_{{u_1^\prime},{u_2^\prime},{u_3}^\prime}\rangle$ is new state obtained by Alice and $m_i$ participants performing unitary transformation.

\indent\textbf{3.3.5} According to the GHZ-state joint measurement of $\mid\varPsi_{{u_1^\prime},{u_2^\prime},{u_3}^\prime}\rangle$, the participant $P_1^{(i)}$ can get the following three equations
\begin{align}
 \ s_1 &=u_1^\prime,\\
 \ s_2 &=u_2^\prime,\\
 \ s_3 &=u_3^\prime,
\end{align}
The participant $P_1^{(i)}$ can respectively verify whether $H(u_j^\prime)$ and $H(s_j)$ are equal can according to equations (5), (6) and (7), $j=1,2,3$. If the equation holds, participant $P_1^{(i)}$ can decide that the recovered secrets are correct; otherwise, it can judge that in the process of the secrets recovery, some participants are dishonest, the process can be stopped.\\
\indent Other participants $P_j^{(i)}$ in the authorized set $A_i$ can obtain secrets in the same way. The particle sequence transmission in this scheme is shown in Figure 1.

\begin{figure}[h]%
\centering
\includegraphics[width=\textwidth]{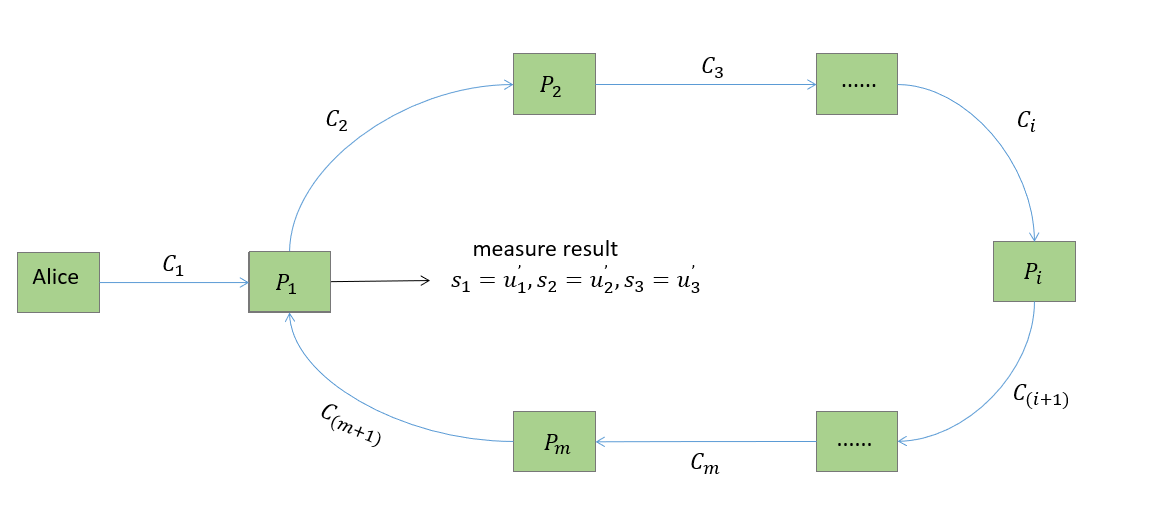}
\centering
\caption{The particle sequence transmission in this scheme}\label{fig1}   
\end{figure}

\section{Example}\label{sec4}
To better understand the above steps, here we showcase one example.

Assume the participant $P=\lbrace{P_1,P_2,P_3,P_4}\rbrace$, the access structure $\varGamma=\lbrace A_1=P,A_2=\lbrace{P_1,P_2,P_3}\rbrace,A_3=\lbrace{P_1,P_3,P_4}\rbrace \rbrace$. About $\varGamma$, Alice constructs a monotone span program $(F_7,M,f,\xi)$ on $\varGamma$, where $M=\begin{pmatrix}
    0 & 0 & 1 & 1 \\
    0 & 1 & 0 & 6 \\
    2 & 1 & 1 & 0 \\
    3 & 1 & 2 & 1
\end{pmatrix}$. Mapping $f$ satisfies $f(\alpha_i)=P_i,i\in\lbrace{1,2,3,4}\rbrace$.$\alpha_i$ is row $i$ of matrix $M$, do the mapping $f$. For the convenience of narration, we choose the authorized set $A_1$ in the access structure $\varGamma$ which is used to transmit the GHZ state, and $P_1$ recovers the final secrets, If Alice wants to share the classical information $(s_1,s_2,s_3)=(2,4,6)$, she must perform the following operations:  

\subsection{Preparation Phase}\label{subsec1}

\textbf{4.1.1} Alice prepares randomly a three-qudit GHZ state $\mid\varPsi_{6,1,3}\rangle$ in the 7-dimensional Hilbert space. Three particles of $\mid\varPsi_{6,1,3}\rangle$ are called as $Q_1,Q_2,Q_3$.

\indent\textbf{4.1.2} Alice prepares randomly some decoy particles in $Z$ basis and $X$ basis.
\indent\textbf{4.1.3} According to the secrets $s_1=2,s_2=4$, Alice constructs vectors
\begin{align}
 \rho_1 &=(2,1,0,5)^T,\notag\\
 \rho_2 &=(4,3,5,1)^T,\notag\
\end{align}

\indent\textbf{4.1.4} Alice according to the monotone span program, for the authorized set $A_1$ in the access structure $\varGamma$,$M_{A_1}=M$ can be obtained by mapping $f$, calculates
\begin{align}
 \ S^{(1)} &=M\rho_1=(s_1^{(1)},s_2^{(1)},s_3^{(1)},s_4^{(1)})^T=(5,3,5,5)^T,\notag\\
 \ T^{(1)} &=M\rho_2=(t_1^{(1)},t_2^{(1)},t_3^{(1)},t_4^{(1)})^T=(6,2,2,5)^T.\notag\
\end{align}
And the unique solution $\lambda_{A_1}=(3,3,4,0)^T$ can be obtained according to the $M_{A_1}$ and $\xi$.

\indent\textbf{4.1.5} Alice uses the public hash function $H(\cdot)$, computes and publishes $H(2),H(4),H(6)$.

\subsection{Distribution Phase}\label{subsec2}

\textbf{4.2.1} Alice sends the corresponding $j$ th component of the vectors $S^{(1)},T^{(1)}$ and $\lambda_{(A_1)}$ as secret shares to the participants $P_j^{(1)}$ in the authorized set $A_1$ through the secure quantum channel.$j=1,2,3,4$.

\indent\textbf{4.2.2} Alice inserts the particle $Q_1$ into the decoy particles to form the particle sequences $C_1$. And records the inserted position and initial state of the decoy particles. Then Alice sends the particle sequences $C_1$ to the participant $P_1^{(1)}$ in the authorized $A_1$.

\subsection{Recovery Phase}\label{subsec3}

\textbf{4.3.1} After ensuring the participant $P_1^{(1)}$ receives these particle sequences $C_1$, Alice publicly informs participant $P_1^{(1)}$ of the basis with the decoy particles, and then asks the participant $P_1^{(1)}$ to select the $Z$ basis or the $X$ basis to measure these particles according to the basis of the decoy particle. Next, $P_1^{(1)}$ publishes the measurement results. Alice calculates the error ratio by comparing the measurement results with initial state of decoy particles. If the value exceeds the predetermined threshold (usually not more than 11\% \cite{ref-ur34}), Alice asks the participant $P_1^{(1)}$ to stop this operation and start a new round. Otherwise, continue to execute these steps. 

\indent\textbf{4.3.2} Participant $P_1^{(1)}$ prepares a random number $q_1=5$, and performs unitary transformation $U_{{5},{5},{0}}=\sum_{j=0}^{d-1} \omega^{5j}\mid{j}\rangle\langle{j+5}\mid\notag $ on particle $Q_1$, then the GHZ state $\mid\varPsi_{6,1,3}\rangle$  will become $\mid\varPsi_{4,6,1}\rangle$. Then participant $P_1^{(1)}$ prepares some decoy particles on the $Z$ basis and $X$ basis, and inserts the particle $Q_1$ into the decoy particles to form the particle sequences $C_2$, then transmits these particle sequences $C_2$ to the next participant $P_2^{(1)}$. Similarly, participant $P_1^{(1)}$ can use these decoy particles to check the security of the channel between $P_1^{(1)}$ and $P_2^{(1)}$. Participant $P_2^{(1)}$ performs unitary transformation
$U_{2,6,0}=\sum_{j=0}^{6} \omega^{2j}\mid{j}\rangle\langle{j+6}\mid\notag$ on particle $Q_1$, then the GHZ state $\mid\varPsi_{4,6,1}\rangle$ will become 
$\mid\varPsi_{6,5,0}\rangle$,  $(\lambda_{A_1}^{(2)}s_2^{(1)}=3\times3=2(\bmod{7}), \lambda_{A_1}^{(2)}t_2^{(1)}=3\times2=6(\bmod{7}))$, where $\lambda_{A_1}^{(2)}$ is the private value of participant $P_2^{(1)}$, $s_2^{(1)}$ and $t_2^{(1)}$ are the corresponding 2 th component of the vectors $S^{(1)},T^{(1)}$. Repeat this operation, participant $P_3^{(1)}$ performs unitary transformation
$U_{6,1,0}=\sum_{j=0}^{6} \omega^{6j}\mid{j}\rangle\langle{j+1}\mid\notag$ on particle $Q_1$, then the GHZ state $\mid\varPsi_{6,5,0}\rangle$ will become 
$\mid\varPsi_{5,6,1}\rangle$,  $(\lambda_{A_1}^{(3)}s_3^{(1)}=4\times5=6(\bmod{7}), \lambda_{A_1}^{(3)}t_3^{(1)}=4\times2=1(\bmod{7}))$, where $\lambda_{A_1}^{(3)}$ is the private value of participant $P_3^{(1)}$, $s_3^{(1)}$ and $t_3^{(1)}$ are the corresponding 3 th component of the vectors $S^{(1)},T^{(1)}$. Participant $P_4^{(1)}$ performs unitary transformation $U_{0,0,0}=\sum_{j=0}^{6} \omega^{0j}\mid{j}\rangle\langle{j+0}\mid\notag$
on particle $Q_1$, then the GHZ state $\mid\varPsi_{5,6,1}\rangle$ will become 
$\mid\varPsi_{5,6,1}\rangle$,  $(\lambda_{A_1}^{(4)}s_4^{(1)}=0\times5=0(\bmod{7}), \lambda_{A_1}^{(4)}t_4^{(1)}=0\times5=0(\bmod{7}))$, where $\lambda_{A_1}^{(4)}$ is the private value of participant $P_4^{(1)}$, $s_4^{(1)}$ and $t_4^{(1)}$ are the corresponding 4 th component of the vectors $S^{(1)},T^{(1)}$.

\indent\textbf{4.3.3} After all above operations are completed, Alice performs a unitary transformation $U_{{7-6},{7+1},{0}}$ on the particle $Q_2$, then Alice performs a unitary transformation $U_{{0},{7+3+4-6},{0}}$ on the particle $Q_3$, then the GHZ state 
$\mid\varPsi_{5,6,1}\rangle$ will become 
$\mid\varPsi_{6,5,0}\rangle$. Last, Alice sends the particle $Q_2$ and $Q_3$ to the participant $P_1^{(1)}$. (this step also requires the use of decoy particles for security checks)

\indent\textbf{4.3.4} Participant $P_4^{(1)}$ sends the particle $Q_1$ to the participant $P_1^{(1)}$ at the same. After participant $P_1^{(1)}$ receives the particle $Q_1$, participant $P_1^{(1)}$ performs a unitary transformation
 $U_{1-5,4-5,0}=\sum_{j=0}^{6} \omega^{3j}\mid{j}\rangle\langle{j+6}\mid\notag$ on particle $Q_1$, then the GHZ state $\mid\varPsi_{6,5,0}\rangle$ will become 
$\mid\varPsi_{2,4,6}\rangle$,  $(\lambda_{A_1}^{(1)}s_1^{(1)}=3\times5=1(\bmod{7}), \lambda_{A_1}^{(1)}t_1^{(1)}=3\times6=4(\bmod{7}))$, where $\lambda_{A_1}^{(1)}$ is the private value of participant $P_1^{(1)}$, $s_1^{(1)}$ and $t_1^{(1)}$ are the corresponding 1 th component of the vectors $S^{(1)},T^{(1)}$. Eventually, participant $P_1^{(1)}$ performs GHZ-state joint measurement on the state $\mid\varPsi_{2,4,6}\rangle$, where $\mid\varPsi_{2,4,6}\rangle$ is new state obtained by Alice and 4 participants performing unitary transformation.

\indent\textbf{4.3.5} According to the GHZ-state joint measurement of $\mid\varPsi_{2,4,6}\rangle$, the participant $P_1^{(1)}$ can get the following three equations
\begin{align}
 \ s_1^\prime &=2,\\
 \ s_1^\prime &=4,\\
 \ s_1^\prime &=6,
\end{align}
The participant $P_1^{(1)}$ can respectively verify whether $H(s_1^\prime),H(s_2^\prime),H(s_3^\prime)$ and $H(2),H(4),H(6)$ are equal according to equations (8), (9) and (10). If the equation holds, participant $P_1^{(1)}$ can decide that the recovered secrets are correct; otherwise, it can judge that in the process of the secrets recovery, some participants are dishonest, the process can be stopped.\\
\indent Other participants $P_j^{(1)}$ in the authorized set $A_1$ can obtain secrets in the same way.

\section{Scheme analysis}\label{sec5}

\subsection{Correctness Analysis}\label{subsec1}

\begin{theorem}
 In this scheme, if a three-qudit GHZ state is $\mid\varPsi_{{u_1},{u_2},{u_3}}\rangle$, and a unitary transformation is $U_{{\alpha},{\beta},{0}}=\sum_{j=0}^{d-1} \omega^{j{\alpha}}\mid{j}\rangle\langle{j+\beta}\mid$, then, we can get
\begin{align}
    (U_{{\alpha},{\beta},{0}}\otimes I \otimes I)\mid\varPsi_{{u_1},{u_2},{u_3}}\rangle &=\mid\varPsi_{{u_1+\alpha},{u_2+\beta},{u_3+\beta}}\rangle \\
    (I \otimes U_{{\alpha},{\beta},{0}} \otimes I)\mid\varPsi_{{u_1},{u_2},{u_3}}\rangle &=\mid\varPsi_{{u_1+\alpha},{u_2-\beta},{u_3}}\rangle \\
    (I \otimes I \otimes U_{{\alpha},{\beta},{0}})\mid\varPsi_{{u_1},{u_2},{u_3}}\rangle &=\mid\varPsi_{{u_1+\alpha},{u_2},{u_3-\beta}}\rangle
\end{align} 
where $\omega=e^\frac{2\pi i}{d},u_i\in\lbrace0,1,\dots,d-1\rbrace, i=1,2,\dots,n$. The symbol "+" is an addition in the finite field $F_d$. 
\end{theorem}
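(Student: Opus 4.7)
The plan is to prove each of the three identities by direct computation, expanding the GHZ state according to Equation (2) and applying the definition of $U_{\alpha,\beta,0}$ from Equation (3) to the appropriate tensor factor, then reindexing the summation variable so that the result matches the defining form of a GHZ state.

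First, I would establish the basic action of $U_{\alpha,\beta,0}$ on a computational basis ket. From Equation (3), using $\langle j+\beta \mid k\rangle = \delta_{j+\beta,k}$ (addition in $F_d$), one gets $U_{\alpha,\beta,0}\mid k\rangle = \omega^{(k-\beta)\alpha}\mid k-\beta\rangle$. This single identity is the workhorse for all three cases.

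For identity (11), I would apply $U_{\alpha,\beta,0}\otimes I\otimes I$ termwise to the expansion of $\mid\varPsi_{u_1,u_2,u_3}\rangle$, obtaining a sum over $j$ of $\omega^{ju_1}\omega^{(j-\beta)\alpha}\mid j-\beta,\,j+u_2,\,j+u_3\rangle$. I would then substitute $j'=j-\beta$ so the first slot reads $\mid j'\rangle$ and the other two slots become $\mid j'+(u_2+\beta),\,j'+(u_3+\beta)\rangle$; the phase collects as $\omega^{j'(u_1+\alpha)}$ times an overall factor $\omega^{\beta u_1}$. Up to this global phase (which is physically irrelevant and which the paper's statement suppresses), the resulting expression is exactly $\mid\varPsi_{u_1+\alpha,\,u_2+\beta,\,u_3+\beta}\rangle$. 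Identities (12) and (13) are handled the same way, applying the operator to the second or third tensor factor: because the shift now happens on $\mid j+u_2\rangle$ or $\mid j+u_3\rangle$ while the first slot is untouched, no reindexing on the summation variable $j$ is needed, which is precisely why only one of the three parameters gets shifted (and with a minus sign, since we map $j+u_k \mapsto j+u_k-\beta$) while the phase picks up $\omega^{j\alpha}$ and changes $u_1$ to $u_1+\alpha$.

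The main obstacle, though minor, is bookkeeping: making sure the substitution in the first identity correctly propagates $\beta$ into both the second and third slots (whereas in the second and third identities only one slot shifts), and tracking the global phases that arise from the $\omega^{\beta u_1}$ or $\omega^{(u_k-\beta)\alpha}$ factors so that they can be explicitly discarded as irrelevant global phases. Aside from this, the proof is a straightforward three-line algebraic verification for each case.
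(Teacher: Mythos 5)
Your proposal is correct and follows essentially the same route as the paper: a direct termwise expansion of the GHZ state, application of $U_{\alpha,\beta,0}$ to the relevant tensor factor (collapsing $\langle j+\beta\mid k\rangle$ to force $k=j+\beta$, which is equivalent to your single-ket identity $U_{\alpha,\beta,0}\mid k\rangle=\omega^{(k-\beta)\alpha}\mid k-\beta\rangle$), a reindexing of the sum in the first case, and discarding the resulting global phase. Your explicit explanation of why cases (12) and (13) need no reindexing and acquire a minus sign on $\beta$ is a useful detail that the paper leaves to ``similarly.''
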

\begin{proof}
    We konw that
\begin{gather*}
U_{{\alpha},{\beta},{0}}=\sum_{j=0}^{d-1} \omega^{j{\alpha}}\mid{j}\rangle\langle{j+\beta}\mid, \\
\mid\varPsi_{{u_1},{u_2},{u_3}}\rangle=\frac{1}{\sqrt{d}}\sum_{j=0}^{d-1} \omega^{j{u_1}}\mid{j,j+{u_2,j+u_3}}\rangle.
\end{gather*}
When performing $(U_{{\alpha},{\beta},{0}}$ on the first particle of $\mid\varPsi_{{u_1},{u_2},{u_3}}\rangle$, we can get 
\begin{align}
&\mathrel{\phantom{=}}
    (U_{{\alpha},{\beta},{0}}\otimes I \otimes I)\mid\varPsi_{{u_1},{u_2},{u_3}}\rangle \notag \\
&=(\sum_{j=0}^{d-1} \omega^{j_1{\alpha}}\mid{j_1}\rangle\langle{j+\beta}\mid\otimes I \otimes I)(\frac{1}{\sqrt{d}}\sum_{j_2=0}^{d-1} \omega^{{j_2}{u_1}}\mid{j_2,j_2+u_2,j_2+u_3}\rangle) \notag \\
&=\frac{1}{\sqrt{d}}\sum_{j_1=0}^{d-1}\sum_{j_2=0}^{d-1}\omega^{{j_1}\alpha+{j_2}{u_1}}\mid{j_1}\rangle\langle{j_1+\beta}\mid{j_2}\rangle\mid{j_2+u_2}\rangle\mid{j_2+u_3}\rangle \notag \\
&=\frac{1}{\sqrt{d}}\sum_{j_1=0}^{d-1}\omega^{{j_1}\alpha+({j_1+\beta}{u_1})}\mid{j_1}\rangle\mid{j_1+\beta+u_2}\rangle\mid{j_1+\beta+u_3}\rangle (j_2=j_1+\beta (\bmod{d}))\notag \\
&=\frac{1}{\sqrt{d}} \omega^{{\beta_1}{u_1}}\sum_{j_1=0}^{d-1}\omega^{{j_1}(\alpha+{u_1})}\mid{j_1}\rangle\mid{j_1+\beta+u_2}\rangle\mid{j_1+\beta+u_3}\rangle. \notag
\end{align} 
Ignoring the global, eventually, we have
\begin{equation}
U_{{\alpha},{\beta},{0}}\otimes I \otimes I) =\mid\varPsi_{{u_1+\alpha},{u_2+\beta},{u_3+\beta}}\rangle, \notag
\end{equation}
Similary, performing $U_{{\alpha},{\beta},{0}}$ on the second particle of $\mid\varPsi_{{u_1},{u_2},{u_3}}\rangle$ and performing $U_{{\alpha},{\beta},{0}}$ on the third particle of $\mid\varPsi_{{u_1},{u_2},{u_3}}\rangle$, we can get
\begin{align}
    (I \otimes U_{{\alpha},{\beta},{0}} \otimes I)\mid\varPsi_{{u_1},{u_2},{u_3}}\rangle &=\mid\varPsi_{{u_1+\alpha},{u_2-\beta},{u_3}}\rangle \notag\\
    (I \otimes I \otimes U_{{\alpha},{\beta},{0}})\mid\varPsi_{{u_1},{u_2},{u_3}}\rangle &=\mid\varPsi_{{u_1+\alpha},{u_2},{u_3-\beta}}\rangle. \notag
\end{align} 
\end{proof}

\subsection{Verifiability Analysis}\label{subsec2}

\begin{theorem}
 In this scheme, according to the GHZ-state joint measurement on the quantum state $\mid\varPsi_{{u_1^\prime},{u_2^\prime},{u_3^\prime}}\rangle$, the secrets can be obtained $s_i=u_i^\prime$, and the verification information can be obtained $H(s_i)=H(u_i^\prime)$, $i=1,2,3$. 
\end{theorem}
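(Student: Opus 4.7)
The plan is to prove Theorem 2 by tracking the evolution of the three-qudit GHZ state step by step through the Recovery Phase, repeatedly invoking the three identities established in Theorem 1, and then using the defining property of the monotone span program to collapse the accumulated sums to the secrets.

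First I would fix an authorized set $A_i$ and start from the initial state $\mid\varPsi_{u_1,u_2,u_3}\rangle$ prepared by Alice. Applying identity (11) of Theorem 1, the operation $U_{q_1,q_1,0}$ by $P_1^{(i)}$ on $Q_1$ transforms the state into $\mid\varPsi_{u_1+q_1,u_2+q_1,u_3+q_1}\rangle$. I would then iterate identity (11) for the participants $P_2^{(i)},\dots,P_{m_i}^{(i)}$, each of whom applies $U_{\lambda_{A_i}^{(j)}s_j^{(i)},\lambda_{A_i}^{(j)}t_j^{(i)},0}$ to $Q_1$; inductively the state becomes
\[
\mid\varPsi_{u_1+q_1+\sum_{j=2}^{m_i}\lambda_{A_i}^{(j)}s_j^{(i)},\;u_2+q_1+\sum_{j=2}^{m_i}\lambda_{A_i}^{(j)}t_j^{(i)},\;u_3+q_1+\sum_{j=2}^{m_i}\lambda_{A_i}^{(j)}t_j^{(i)}}\rangle.
\]

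Next I would apply identity (12) for Alice's operation $U_{d-u_1,d-u_2,0}$ on $Q_2$ (noting the sign flip on the second index), then identity (13) for Alice's operation $U_{0,d+u_3+s_2-s_3,0}$ on $Q_3$ (another sign flip on the third index). Reducing modulo $d$ these cancel the $u_1$, $u_2$ and $u_3$ contributions and inject the term $s_3-s_2$ into the third index, matching the state displayed in step 3.3.3 of the scheme. Finally I apply identity (11) once more for $P_1^{(i)}$'s operation $U_{\lambda_{A_i}^{(1)}s_1^{(i)}-q_1,\lambda_{A_i}^{(1)}t_1^{(i)}-q_1,0}$ on $Q_1$, which cancels $q_1$ in each slot and folds the $j=1$ term into the sums.

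At this point the state reads $\mid\varPsi_{\sum_{j=1}^{m_i}\lambda_{A_i}^{(j)}s_j^{(i)},\;\sum_{j=1}^{m_i}\lambda_{A_i}^{(j)}t_j^{(i)},\;\sum_{j=1}^{m_i}\lambda_{A_i}^{(j)}t_j^{(i)}-s_2+s_3}\rangle$. The key algebraic step, and the one I expect to be the main obstacle for clarity, is the invocation of the monotone span program property $M_{A_i}^{T}\lambda_{A_i}=\xi$ from Definition 2. Together with $S^{(i)}=M_{A_i}\rho_1$ and $T^{(i)}=M_{A_i}\rho_2$, this yields
\[
\sum_{j=1}^{m_i}\lambda_{A_i}^{(j)}s_j^{(i)}=\lambda_{A_i}^{T}S^{(i)}=\lambda_{A_i}^{T}M_{A_i}\rho_1=\xi^{T}\rho_1=s_1,
\]
and analogously $\sum_{j=1}^{m_i}\lambda_{A_i}^{(j)}t_j^{(i)}=s_2$. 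Substituting these into the indices collapses the second slot to $s_2$ and the third slot to $s_2-s_2+s_3=s_3$, giving the final state $\mid\varPsi_{s_1,s_2,s_3}\rangle=\mid\varPsi_{u_1',u_2',u_3'}\rangle$ with $u_i'=s_i$.

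The verification claim is then immediate: since $H(\cdot)$ is a (well-defined) function on $F_d$, the equality $u_i'=s_i$ forces $H(u_i')=H(s_i)$ for $i=1,2,3$, which is precisely the quantity $P_1^{(i)}$ checks against Alice's published values in step 3.3.5. The only bookkeeping hazard is ensuring all arithmetic is read modulo $d$ and that the sign conventions in identities (12) and (13) are applied in the correct slots; I would make this explicit by writing each intermediate state in full before invoking the MSP identity.
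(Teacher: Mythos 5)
Your proposal follows essentially the same route as the paper's proof: the paper's argument is a compressed version of exactly this computation --- it cites Theorem 1 to assert that the state after all operations is $\mid\varPsi_{\sum_{j=1}^{m_i}\lambda_{A_i}^{(j)}s_j^{(i)},\;\sum_{j=1}^{m_i}\lambda_{A_i}^{(j)}t_j^{(i)},\;\sum_{j=1}^{m_i}\lambda_{A_i}^{(j)}t_j^{(i)}-s_2+s_3}\rangle$ and then invokes the monotone span program to identify the first two sums with $s_1$ and $s_2$. Your explicit chain $\sum_{j=1}^{m_i}\lambda_{A_i}^{(j)}s_j^{(i)}=\lambda_{A_i}^{T}M_{A_i}\rho_1=\xi^{T}\rho_1=s_1$ is the correct justification of the step the paper disposes of with ``from the monotone span program, we can see,'' so on that point your write-up is tighter than the original, and the hash-comparison conclusion is handled the same way in both.

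One concrete step does not check out as you state it. You claim that applying $U_{d-u_1,\,d-u_2,\,0}$ to $Q_2$ and invoking identity (12) ``cancels the $u_2$ contribution.'' Identity (12) maps the second index $u_2+q_1+\sum_{j=2}^{m_i}\lambda_{A_i}^{(j)}t_j^{(i)}$ to that quantity \emph{minus} $\beta$; with $\beta=d-u_2\equiv-u_2\pmod d$ this yields $2u_2+q_1+\sum_{j=2}^{m_i}\lambda_{A_i}^{(j)}t_j^{(i)}$, not the claimed $q_1+\sum_{j=2}^{m_i}\lambda_{A_i}^{(j)}t_j^{(i)}$, unless $u_2=0$. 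The cancellation requires $\beta\equiv u_2$, i.e.\ the operator $U_{d-u_1,\,d+u_2,\,0}$ --- which is in fact what Alice uses in the worked example of Section 4.3.3 ($U_{7-6,\,7+1,\,0}$ with $u_2=1$). So the sign error originates in the scheme description of step 3.3.3 rather than in your strategy, but as written your assertion contradicts the identity you invoke; the fix is to replace $d-u_2$ by $d+u_2$ before applying identity (12). With that correction the state tracking, the collapse via $M_{A_i}^{T}\lambda_{A_i}=\xi$, and the verification claim all go through. (You flagged exactly this sign convention as the ``bookkeeping hazard''; it did in fact bite.)
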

\begin{proof}
    In the secret recovery phase, according to Theorem 1, the quantum state $ \mid\varPsi_{\sum_{j=1}^{m}{\lambda_{A_i}^{(j)}s_j^{(i)}},\sum_{j=1}^{m}{\lambda_{A_i}^{(j)}t_j^{(i)}},\sum_{j=1}^{m}{\lambda_{A_i}^{(j)}t_j^{(i)}-s_2+s_3}}\rangle$ will be obtained after Alice and m participants performing the unitary transformation on the three particles of the quantum state $\mid\varPsi_{{u_1},{u_2},{u_3}}\rangle$, respectively. From the monotone span program, we can see that $\sum_{j=1}^{m}{\lambda_{A_i}^{(j)}s_j^{(i)}}=s_1,\sum_{j=1}^{m}{\lambda_{A_i}^{(j)}t_j^{(i)}}=s_2$, so the above quantum state is $\mid\varPsi_{{s_1},{s_2},{s_3}}\rangle$. Therefore, if a new quantum state $\mid\varPsi_{{u_1^\prime},{u_2^\prime},{u_3^\prime}}\rangle$ is performed on the GHZ-state joint measurement, then $s_1=u_1^\prime, s_2=u_2^\prime$ and $s_3=u_3^\prime$.\\
\indent The participant who finally recovers the secrets can verify the correctness of the recovered recovered secrets by comparing the value of the obtained secrets with the hash value published by Alice. Because of the unidirectivity and acollision resistance of the hash function, if the new quantum state $\mid\varPsi_{{u_1^\prime},{u_2^\prime},{u_3^\prime}}\rangle$ still satisfies the equation $H(u_1^\prime)=H(s_1), H(u_2^\prime)=H(s_2), H(u_3^\prime)=H(s_3)$, dishonest participants can send false information. Therefore, our scheme is verifiable.
\end{proof}

\subsection{Safety Analysis}\label{subsec3}

\subsubsection{Security Performance Of Access Structure}\label{subsubsec1}

\begin{theorem}
 In this scheme, a random authorized set in the access structure $\varGamma$ can recover all secrets; on the other hand, a random unauthorized set in the adversary sets $\Delta$ cannot recover all secrets. 
\end{theorem}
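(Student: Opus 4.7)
The plan is to prove correctness (for authorized sets) and secrecy (for unauthorized sets) separately, leveraging Theorem 1 and the reconstruction identity of the monotone span program for the first direction, and combining the linear-algebraic secrecy of the MSP with the decoy-particle eavesdropping test and the properties of the hash function for the second.

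For an arbitrary authorized set $A_i\in\varGamma$, I would track the subscripts of the three-qudit GHZ state step by step through the recovery phase. Each participant $P_j^{(i)}$ applies $U_{\lambda_{A_i}^{(j)}s_j^{(i)},\lambda_{A_i}^{(j)}t_j^{(i)},0}$ to $Q_1$, which by the first equation of Theorem 1 shifts the first subscript by $\lambda_{A_i}^{(j)}s_j^{(i)}$ and the second and third by $\lambda_{A_i}^{(j)}t_j^{(i)}$; the blinding offset $q_1$ that $P_1^{(i)}$ inserts is cancelled by the $-q_1$ terms in its own final unitary. Alice's unitaries on $Q_2$ and $Q_3$, invoking the second and third equations of Theorem 1, strip away the initial random $u_1,u_2,u_3$ and install the shift $s_3-s_2$ in the third subscript, yielding the state $\mid\varPsi_{\sum_j\lambda_{A_i}^{(j)}s_j^{(i)},\sum_j\lambda_{A_i}^{(j)}t_j^{(i)},\sum_j\lambda_{A_i}^{(j)}t_j^{(i)}+s_3-s_2}\rangle$. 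Applying $M_{A_i}^T\lambda_{A_i}=\xi$ to $\rho_1$ and $\rho_2$ gives $\sum_j\lambda_{A_i}^{(j)}s_j^{(i)}=\rho_1^T\xi=s_1$ and $\sum_j\lambda_{A_i}^{(j)}t_j^{(i)}=\rho_2^T\xi=s_2$, so the state collapses to $\mid\varPsi_{s_1,s_2,s_3}\rangle$ and the joint GHZ measurement returns $(s_1,s_2,s_3)$, which Theorem 2 then confirms via the hash commitments.

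For an unauthorized set $B\in\Delta$, I would argue confidentiality on three fronts. The MSP secrecy property (a sweeping vector $h=(1,h_2,\dots,h_l)$ with $M_B h=0$) forces the distribution of the classical shares $M_B\rho_1$ and $M_B\rho_2$ to be independent of $s_1$ and $s_2$, so no $\lambda$-combination built from $B$'s shares can systematically evaluate to $(s_1,s_2)$. The particles $Q_1,Q_2,Q_3$ are relayed only through an authorized chain under decoy-particle protection, so any outside interception by $B$ is caught with the standard BB84-style bound; since $u_1,u_2,u_3$ are sampled freshly each run, even a perfect in-transit copy is useless without the missing $\lambda$-coefficients. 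Finally, $H(s_1),H(s_2),H(s_3)$ are one-way and collision-resistant, so they neither leak the secrets nor let $B$ push through a consistent false triple.

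The hardest part is the unauthorized case, because the adversary's view mixes classical shares, quantum transcripts, and public hash commitments, and a rigorous statement must preclude any joint strategy rather than each attack in isolation. The clean route is to build an information-theoretic simulator: for any view $B$ can achieve, exhibit an alternative choice of $(s_1,s_2,s_3,\rho_1,\rho_2,u_1,u_2,u_3)$ consistent with it, forcing the conditional distribution of the secrets to remain essentially uniform, with the hash used only to close the last sliver in which a dishonest insider might try to substitute a colliding secret. Unifying these three secrecy layers into one simulator, instead of appealing to them in parallel, is where the main technical work lies.
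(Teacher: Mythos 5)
Your proposal is correct in substance but goes well beyond, and partly around, what the paper actually does. The paper's own proof of this theorem is essentially verbal: it observes $\varGamma\cap\Delta=\emptyset$, notes that Alice distributes shares only to members of authorized sets, and then simply asserts that an unauthorized set ``cannot obtain all the secret shares needed'' and hence cannot recover the secrets; the quantum-state bookkeeping is deferred to Theorem 2, and no linear-algebraic secrecy argument is given at all. Your authorized-set direction (tracking the GHZ subscripts through the unitaries, cancelling $q_1$, and applying $M_{A_i}^T\lambda_{A_i}=\xi$ to get $\sum_j\lambda_{A_i}^{(j)}s_j^{(i)}=s_1$ and $\sum_j\lambda_{A_i}^{(j)}t_j^{(i)}=s_2$) is essentially the paper's Theorem 2 argument transplanted here, which is fine. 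Your unauthorized-set direction is genuinely different and strictly stronger: you invoke the MSP secrecy property via a sweeping vector $h=(1,h_2,\dots,h_l)$ with $M_Bh=0$, so that $\rho$ and $\rho+ch$ produce identical shares for $B$ while shifting the secret, making $B$'s classical view independent of $(s_1,s_2)$. Note that this is the \emph{standard} MSP secrecy condition, not the one the paper states — Definition 2(2) as written ($M_{A_i}h\in F_d^{m_i}$) is vacuous and is quantified over authorized rather than unauthorized sets — so you are silently repairing the paper's definition in order to make the secrecy claim provable; that repair is necessary, since without it the theorem's second half has no support. Your closing proposal of a unified information-theoretic simulator over the joint classical/quantum/hash view is a sound and more rigorous program than anything in the paper, though it is only sketched; as a blueprint it buys genuine security against joint strategies, whereas the paper's argument (and the itemized attack analysis in Section 5.3.2) only addresses each leakage channel in isolation.
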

\begin{proof}
    The Definition of the access structure $\varGamma$ is : $ B\in\varGamma$, when $A\subseteq B\subseteq P $.  $\varGamma $ is the authorized set. Unauthorized sets are called adversary sets, denoted as $\Delta $, that is to say, $\Delta=\varGamma^c $.\\
\indent According to the definition of the access structure $\varGamma$, we can obtain $\varGamma \cap \Delta=\emptyset$. Therefore, in the selection process of the authorized set, it will never appear the condition of $\Delta \subseteq \varGamma$. In addition, Alice only sends the secret share to each participant in the authorized set, according to the monotone span program and linear secret sharing, all participants in the authorized set can obtain all secret shares, the participants perform unitary transformation to recover the secrets during the process of the particle transmission, the last participant can obtain the final secrets through GHZ-state joint measurement. Therefore, we can know a random unauthorized set, which cannot obtain all the secret shares needed to recover the secrets, so it is impossible to recover the final secrets.
\end{proof} 

\subsubsection{Attack Methods}\label{subsubsec2}

\bmhead{(i) Intercept-resend Attack}  Assume there exists an eavesdropper is Eve, she intercepts the particles sent by Alice or some participants. We all know the particles in the process of transmission are composed of some useful particles and decoy particles. Each decoy particle is randomly selected from the $Z$ basis or the $X$ basis, Alice and all participants randomly insert useful particles into the decoy particles. The eavesdropper Eve does not know the location of the insertion, the measurement basis, and the specific state of some decoy particles. Therefore, it can be detected during the security check between Alice and particles and between participants and participants. Moreover, the first participant uses a random number, even if the eavesdropper Eve gets useful particles, she cannot get real secret information.

\bmhead {(ii) Entanglement-measurement Attack}  In the process of particle transmission, the eavesdropper Eve uses the unitary transformation $U_E$ to entangle an auxiliary particle are $ \mid E \rangle$. If the decoy particle is randomly selected from the $Z$ basis, then the unitary transformation $U_E$ first acts on the decoy particle to obtain:  
\begin{equation}
U_E\mid i\rangle\mid E \rangle =\sum_{j=1}^{m}a_{ij}\mid j \rangle\mid e_{ij}\rangle, \notag
\end{equation}
where $\mid e_{ij}\rangle (i,j\in\lbrace{0,1,\dots,d-1}\rbrace)$ are the states determined by the unitary operation $U_E$, and $\sum_{j=1}^{d-1}{\mid a_{ij}\mid}^2=1(i=0,1,\dots,d-1)$.\\
\indent In order to avoid the eavesdropping check, Eve has to set : $a_{ij}=0$, where $i\ne j$, and $i,j \in\lbrace{0,1,\dots,d-1}\rbrace$. Thus, the influence of $U_E$ performed on the decoy particle is simplified as
\begin{equation*}
\ U_E\mid i\rangle\mid E\rangle=a_{ii}\mid i\rangle\mid e_{ii}\rangle, i=0,1,\dots,d-1.
\end{equation*}
\indent If the decoy particle is in the $X$ basis, the influence of $U_E$ performed on the decoy particle is as follows :
\begin{equation*}  \begin{split}
\ U_E\mid J_j\rangle\mid E\rangle &=\ U_E(\frac{1}{\sqrt{d}}\sum_{j_1=0}^{d-1}\mid k\rangle)\mid E\rangle \\
 &=\frac{1}{\sqrt{d}}\sum_{j_1=0}^{d-1}\omega^{kj}U_E\mid k\rangle\mid E\rangle \\
 &=\frac{1}{\sqrt{d}}\sum_{j_1=0}^{d-1}\omega^{kj}a_{kk}\mid k\rangle\mid e_{kk}\rangle
\end{split}   \end{equation*}
where $j \in \lbrace{0,1,\dots,d-1}\rbrace$.\\
\indent In addition, we all know $\mid j\rangle=\frac{1}{\sqrt{d}}\sum_{j_1=0}^{d-1}\omega^{-kj}\mid J_k\rangle$, so
\begin{equation*}
\ U_E\mid J_j\rangle\mid E\rangle=\frac{1}{d}\sum_{k=0}^{d-1}\sum_{m=0}^{d-1}\omega^{k(j-m)}a_{kk}\mid J_m\rangle\mid e_{kk}\rangle.
\end{equation*}
\indent In order to avoid the eavesdropping check, Eve has to set
\begin{equation*}
\sum_{m=0}^{d-1}\omega^{k(j-m)}a_{kk}\mid e_{kk}\rangle=0.
\end{equation*}
where $m \in \lbrace{0,1,\dots,d-1}\rbrace$ and $m\ne j$. Then for any $j\in \lbrace{0,1,\dots,d-1}\rbrace$, we can get d-1 equations.
\begin{gather*}
    a_{00}\mid e_{00}\rangle+\omega a_{11}\mid e_{11}\rangle+\cdots+\omega^{d-1} a_{(d-1)(d-1)}\mid e_{(d-1)(d-1)}\rangle=0, \\
    a_{00}\mid e_{00}\rangle+\omega^2 a_{11}\mid e_{11}\rangle+\cdots+\omega^{2(d-1)} a_{(d-1)(d-1)}\mid e_{(d-1)(d-1)}\rangle=0, \\
    \vdots \\
    a_{00}\mid e_{00}\rangle+\omega^{d-1} a_{11}\mid e_{11}\rangle+\cdots+\omega^{{d-1}^2} a_{(d-1)(d-1)}\mid e_{(d-1)(d-1)}\rangle=0. \\
\end{gather*}
Based on these d-1 equations, we can compute that 
\begin{equation*}
a_{00}\mid e_{00}\rangle=a_{11}\mid e_{11}\rangle=\cdots=a_{(d-1)(d-1)}\mid e_{(d-1)(d-1)}\rangle.
\end{equation*}
Then, we obtain
\begin{equation*}  \begin{split}
\ U_E\mid J_j\rangle\mid E\rangle &=\frac{1}{d}\sum_{k=0}^{d-1}\sum_{m=0}^{d-1}\omega^{k(j-m)}\mid J_m\rangle\otimes a_{00}\mid e_{00}\rangle \\
 &=\frac{1}{d}\sum_{k=0}^{d-1}\omega^{kj}\mid k\rangle\otimes a_{00}\mid e_{00}\rangle.
\end{split}   \end{equation*}
\indent Therefore, Eve will get the same information from the ancillary particle regardless of the state of the useful particle, and cannot eavesdrop on the secret information. In conclusion, the attack of entangle-measurement fails.

\bmhead{(iii) Participant Attack}  In the quantum secret sharing scheme, the influence of dishonest participants on the scheme is destructive. In our scheme, we use decoy particles randomly selected from the $Z$ basis and the $X$ to ensure the security transmission of useful particles. In addition, the participants do not know the position, state and correct measurement basis of the decoy particles during the process of transmitting the particles. Therefore, if the participants launch attacks, it will be found. On the other hand, from the above analysis, we can also know that due to the existence of decoy particles, even if dishonest participants use intercept-resend attacks, they cannot obtain true secret information. Therefore, our scheme can resist participant attack.

\bmhead{(iv) Collusion Attack}  Collusion attack is a more serious attack than participant attack. In the worst case, we assume that only Alice and the participant who finally recovered the secret are honest, and that the remaining $m-1$ participant wants to unite to recover the secret. They can only get $\sum_{j=2}^{m}\lambda_{A_i}^{(j)}s_j^{(i)}$ 
and $\sum_{j=2}^{m}\lambda_{A_i}^{(j)}t_j^{(i)}$ by exchanging their secret shares. However, the share information of Alice and the participant who finally recovered the secret could not be obtained. On the other hand, the participant who finally recovers the secret initially embeds a random number in the quantum state. Even if $m-1$ participants perform unitary transformation on the quantum state and makes measurement, the correct measurement information cannot be obtained, so the secret cannot be recovered.

\section{Efficiency Comparison}\label{sec6}

Here we will compare the proposed quantum secret sharing scheme with the existing four quantum secret sharing schemes in following four aspects : the number of measured quantum bits, the number of unitary transformations, the number of the recovered secrets and the security of these schemes. The existing four quantum secret sharing schemes are Yang \citep{ref-ur15}, Qin \citep{ref-ur16}, Mashhadi \citep{ref-ur20} and Wu \citep{ref-ur21}.\\
\indent(1) Yang \citep{ref-ur15} scheme   It proposes a $(n,n)$ threshold multidimensional quantum secret sharing scheme based on QFT, in which only when $n$ participants want to rebuild the secret together, and if $n$ participants want to rebuild the secret together, they must share an $n$-particle entangled state, the total number of transmitted message particles is $n-1$, and the number of final recovered secrets is $1$. In addition, all participants need to make $n$ measurements and perform $n$ QFT and $n$ unitary operations respective. Since this scheme does not insert decoy particles, it cannot resist eavesdropping attacks. \\
\indent(2) Qin \citep{ref-ur16} scheme   It proposes a verifiable $(t,n)$ threshold quantum secret sharing scheme by using d-dimensional Bell state and Lagrange interpolation. If $t$ participants want to rebuild the secret together, Alice needs to prepare a d-dimensional Bell state, each participant needs to prepare $L$ decoy particles, the total number of transmitted message particles is $m-1$, and the number of final recovered secrets is $1$. In addition, all participants need to make 1 measurements and perform $m$ unitary operations. \\
\indent(3) Mashhadi \citep{ref-ur20} scheme   It designs a hybrid quantum secret sharing scheme based on QFT and monotone span program. If an authorized set(including $m$ participants) wants to reconstruct the secret, each participant needs to prepare $L$ decoy particles, the total number of transmitted message particles is $m-1$, and the number of final recovered secrets is $1$. In addition, all participants need to make $m$ measurements and perform $m+1$ QFT, $m-1$ SUM and $m$ order-unitary operations respective. This scheme uses twice hash function to verify the security of the scheme.  \\
\indent(4) Wu \citep{ref-ur21} scheme   It proposes a general multidimensional quantum secret sharing scheme, in which only the participants in the authorized set can reconstruct the secret. If an authorized set(containing $m$ participants) wants to reconstruct the secret, Alice needs to prepare a three-qudit GHZ state, each participant needs to prepare $L$ decoy particles, the total number of transmitted message particles is $m-1$, and the number of final recovered secrets is $1$. In addition, all participants need to make one measurement and perform $m+2$ unitary operations respective. This scheme uses two hash function to check the security of scheme.  \\
\indent(5) Our scheme designs a efficient quantum secret sharing scheme based on the access structure and monotone span program, in which only the participants in the authorized set can reconstruct the secret. If an authorized set(containing $m$ participants) wants to reconstruct the secret, Alice needs to prepare a three-qudit GHZ state, each participant needs to prepare $L$ decoy particles, the total number of transmitted message particles is $m-1$, and the number of final recovered secrets is $3$. In addition, all participants need to make one measurement and perform $m+3$ unitary operations respective. 
\indent Table 1 shows a detailed comparison between these schemes and our proposed schemes. The analysis shows that our schemes not only ensure security, but also avoid the waste of quantum resources.

\begin{sidewaystable}
\sidewaystablefn%
\begin{center}
\begin{minipage}{\textheight}
\caption{Comparison between our scheme and literature \citep{ref-ur15}, \citep{ref-ur16}, \citep{ref-ur20}, \citep{ref-ur21}}\label{tab1}
\begin{tabular*}{\textheight}{@{\extracolsep{\fill}}lcccccc@{\extracolsep{\fill}}}
\toprule%
 & Yang \citep{ref-ur15}  & Qin \citep{ref-ur16} & Mashhadi \citep{ref-ur20} & Wu \citep{ref-ur21} & Ours\\
\midrule
Model    & $(n,n)$   & $(t,n)$  & General & General & General \\
Numbers of message particles  & $n-1$   & $t-1$  & $m-1$ & $m$& $m$ \\
Numbers of unitary operation  & $n$   & $t$  & $m$ & $m+2$& $m+3$ \\
Numbers of measurement operation  & $n$   & $1$  & $1$ & $1$ & $1$ \\
Numbers of recovered secrets  & 1   & 1  & 1 & 1 & 3 \\
Eavesdropping check  & -   & Decoy particles  & Decoy particles & Decoy particles & Decoy particles \\
Entanglement measurement  & -   & Yes  & Yes  & Yes & Yes\\
Verification of secret  & -   & Once hash  & Twice hash  & Twice hash & Once hash\\
\botrule
\end{tabular*}
\end{minipage}
\end{center}
\end{sidewaystable}


\section{Conclusion}\label{sec7}

In this paper, we propose a efficient multiple secret sharing scheme based on monotone span program using GHZ state and linear secret sharing on a general access structure. In this scheme, the distributor can transfer GHZ state among participants to share multiple secrets. Participants perform appropriate unitary transformation on the quantum state according to their secret share, and participants who want to obtain the secret perform a GHZ-state joint measurement on the final quantum state. The scheme uses decoy particles for eavesdropping check, and uses hash function to check the correctness of the recovered secrets. It has less verification information, avoids the waste of quantum state, and achieves higher security. In addition, the security analysis shows that the proposed scheme can resist interception-retransmission attacks, entanglement-measurement attacks, and participant attacks. The scheme can transmit more information by extending quantum bits, and will have more applications in the future.

\backmatter

\section*{Declarations}

\begin{itemize}
\item Ethics Approval and Consent to participate : Not applicable.
\item Consent for publication : All authors have read and agreed to the published version of the manuscript.
\item Availability of supporting data : Data sharing not applicable to this article as no datasets were generated or analysed during the current study.
\item Competing interests : The authors declare no conflict of interest. 
\item Funding : This research was funded by National Natural Science Foundation of China, grant number 11671244.
\item Authors' contributions : Conceptualization, Z.L. and S.L.; writing-original draft preparation, S.L.; writing-review, Z.L. and J.G.; editing, S.L. and D.M.
\end{itemize}

\noindent


\begin{appendices}




\end{appendices}




\end{document}